\documentclass[sigconf]{acmart}
\AtBeginDocument{%
  }

\usepackage{booktabs}
\usepackage{multirow}
\usepackage{graphicx}
\usepackage{enumitem}
\usepackage{tabularx}

\newtheorem{lemma}{Lemma}

\newcommand{\ie}{\emph{i.e., }}
\newcommand{\eg}{\emph{e.g., }}

\newcommand{\cf}{\emph{cf. }}
\newcommand{\aka}{\emph{a.k.a. }}

\definecolor{myred}{HTML}{ffdfdf}

\setcopyright{acmlicensed}
\copyrightyear{2026}
\acmYear{2026}
\acmDOI{XXXXXXX.XXXXXXX}

\acmConference[Preprint]{arXiv Preprint}{2026}{Hangzhou, China}
\acmISBN{XXX-X-XXXX-XXXX-X/XXXX/XX}

\usepackage{multirow}
\usepackage[table,xcdraw]{xcolor}
\usepackage[normalem]{ulem}
\usepackage{adjustbox}
\usepackage{enumitem}
\usepackage{subcaption}
\begin{document}

\title{LBR: Towards Mitigating Length Bias in Large Language Models for Recommendation}

\author{Hongchen Li}
\email{void_jack@zju.edu.cn}
\orcid{0009-0006-2317-8305}
\authornotemark[2]
\authornotemark[3]
\affiliation{%
  \institution{Zhejiang University}
  \city{Hangzhou}
  \state{Zhejiang}
  \country{China}
}

\author{Bohao Wang}
\orcid{0009-0006-8264-3182}
\email{bohao.wang@zju.edu.cn}
\authornotemark[2]
\authornotemark[3]
\affiliation{%
  \institution{Zhejiang University}
  \city{Hangzhou}
  \state{Zhejiang}
  \country{China}
}

\author{Jingbang Chen}
\orcid{0000-0002-7279-0801}
\email{chenjb@cuhk.edu.cn}
\affiliation{%
  \institution{The Chinese University of Hong Kong}
  \city{Hong Kong}
  \country{China}
}

\author{Weiqin Yang}
\orcid{0000-0002-5750-5515}
\email{tinysnow@zju.edu.cn}
\authornotemark[2]
\authornotemark[3]
\affiliation{%
  \institution{Zhejiang University}
  \city{Hangzhou}
  \state{Zhejiang}
  \country{China}
}

\author{Hang Pan}
\orcid{0000-0001-8020-203X}
\email{hungpaan@mail.ustc.edu.cn}
\affiliation{%
  \institution{University of Science and Technology of China}
  \city{Hefei}
  \state{Anhui}
  \country{China}
}

\author{Bingde Hu}
\orcid{0000-0003-2556-9239}
\email{tonyhu@zju.edu.cn}
\affiliation{%
  \institution{Bangsun Technology}
  \city{Hangzhou}
  \country{China}
}

\author{Can Wang}
\email{wcan@zju.edu.cn}
\orcid{0000-0002-5890-4307}
\authornotemark[2]
\authornotemark[3]
\affiliation{%
  \institution{Zhejiang University}
  \city{Hangzhou}
  \state{Zhejiang}
  \country{China}
}

\author{Jiawei Chen}
\orcid{0000-0002-4752-2629}
\email{sleepyhunt@zju.edu.cn}
\authornote{Corresponding author.}
\authornote{State Key Laboratory of Blockchain and Data Security, Zhejiang University.}
\authornote{College of Computer Science and Technology, Zhejiang University.}
\affiliation{%
  \institution{Zhejiang University}
  \city{Hangzhou}
  \state{Zhejiang}
  \country{China}
}

\renewcommand{\shortauthors}{Hongchen Li et al.}

\begin{abstract}
Large language models (LLMs) have recently emerged as powerful backbones for recommender systems by reformulating recommendation as a token-level generation task. Despite their promise, we identify a pervasive yet underexplored issue:  \textit{Length Bias}. Because items are represented by textual descriptions of varying lengths, LLM-based recommenders can be systematically biased in two ways. On the input side, longer item descriptions occupy more tokens in the context and thus receive disproportionately large aggregate attention mass during user preference modeling. On the output side, decoding based on summed autoregressive log-likelihood score inherently disfavors long items. Worse still, conventional length normalization can introduce an additional bias and even degrade recommendation performance.

To address this problem, we propose \textbf{LBR} (\textbf{L}ength \textbf{B}ias \textbf{R}eduction), a lightweight and model-agnostic framework for mitigating length bias in LLM-based recommendation. LBR mitigates input-side bias via \textit{Length-Aware Attention Calibration}, which incorporates a length-dependent offset into attention logits to neutralize attention skew.
For the output side, LBR introduces \textit{Effective Information Length Normalization}, replacing naive token count with an information-theoretic length surrogate derived from the branching structure of the prefix tree.
Extensive experiments on three real-world Amazon datasets and two representative LLM-based recommenders demonstrate that LBR substantially alleviates length bias while consistently improving recommendation accuracy and fairness, with negligible additional training and inference overhead (with an average NDCG@5 gain of $16.82\%$). The code is available at \url{https://github.com/Void-JackLee/LBR}.
\end{abstract}

\begin{CCSXML}
<ccs2012>
   <concept>
       <concept_id>10002951.10003317.10003347.10003350</concept_id>
       <concept_desc>Information systems~Recommender systems</concept_desc>
       <concept_significance>500</concept_significance>
       </concept>
 </ccs2012>
\end{CCSXML}

\ccsdesc[500]{Information systems~Recommender systems}

\keywords{Sequential Recommendation; Large Language Models; Bias}


\maketitle

\section{Introduction}

\begin{figure*}[t]
  \centering
  \includegraphics[width=\linewidth]{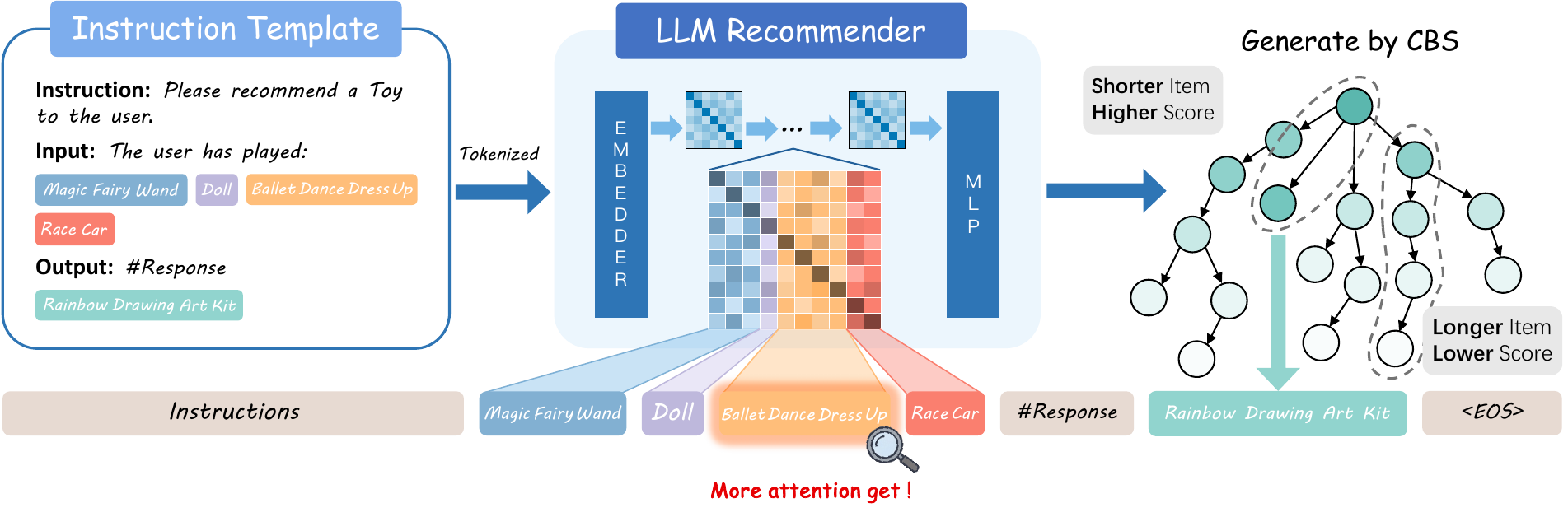}
  \caption{
  Illustration of LLM-based recommendations and length bias, wherein item length affects attention accumulation values and score computation during Trie-Constrained decoding. Longer items will get more attention, meanwhile shorter items are preferred by decoding.
  }
  \Description { long items with fewer hard tokens and shorter items with more hard tokens}
  \label{fig:attention}
\end{figure*}

\begin{figure}[t]
  \centering
  \includegraphics[width=0.9\linewidth]{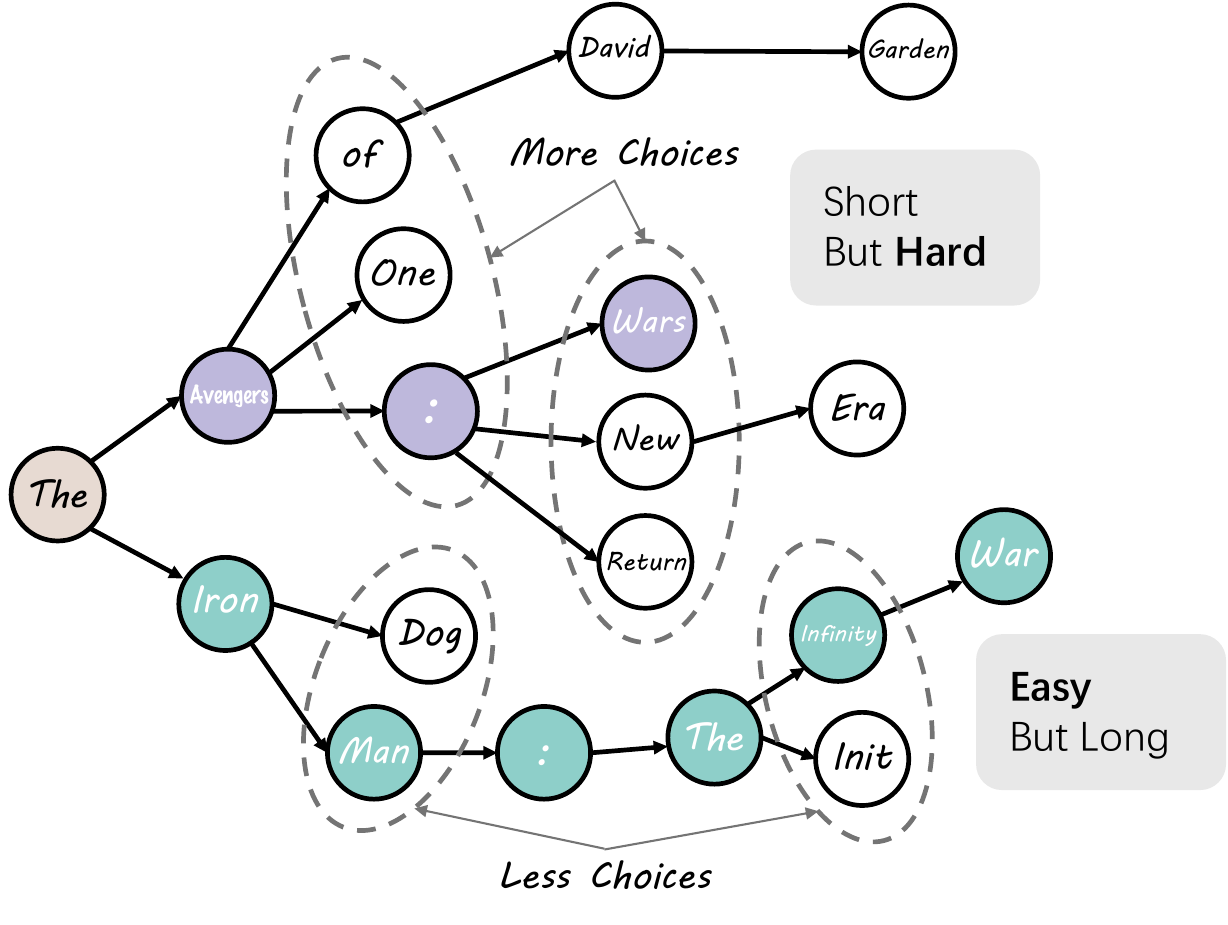}
  \caption{ Visualization of Trie-Constrained decoding. During decoding, the token generated at each step is constrained to a set of valid tokens, and the size of this valid-token set varies across steps.
  }
  \Description { long items with fewer hard tokens and shorter items with more hard tokens}
  \label{fig:decision-on-tree}
\end{figure}

Large Language Models (LLMs) have demonstrated remarkable progress across a wide range of domains \cite{multitask, squad-questions, qwen25math}, prompting a growing body of research to investigate their application in recommender systems (RS) \cite{wu2024survey}. A prominent paradigm is LLM-based recommendation \cite{wang2025msl, li2023large, bao2025bigrec, bao2023tallrec, zhu2024collaborative, wang2023zero, tan2024idgenrec, zheng2024harnessing, wang2024flip, LLaRA, kim2024large}, which directly employs an LLM as the recommendation backbone. 
As illustrated in Figure~\ref{fig:attention}, this paradigm typically reformulates recommendation as a token-level language generation task. It first represents items using textual descriptions (\eg titles), and then constructs language prompts from users’ historical interactions to instruct LLMs to predict their future preferences.
During inference, the model autoregressively generates an item description token by token, typically using Constrained Beam Search (CBS) \cite{hokamp2017lexically}, where decoding is guided by a prefix tree (\ie Trie) to strictly confine token selection to valid catalog items  (\cf Figure~\ref{fig:decision-on-tree}).

Despite these promising advances, we identify a pervasive \textit{length bias} in LLM-based recommendation: in practice, items often vary substantially in the length of their textual descriptions (\eg titles and summaries), leading to systematic bias in recommendation from two aspects:
\begin{itemize}[left=5pt]
\item \textbf{Input-Side Length Bias in Self-Attention Modeling.} LLMs are fundamentally built upon token-wise self-attention mechanisms. However, in recommendation scenarios, items with longer token sequences are naturally over-represented in the input context, and therefore tend to receive disproportionately larger aggregate attention mass during modeling (\cf Figure~\ref{fig:attention}). Consequently, longer items may exert an outsized influence on the modeling of user preference and further skew recommendation generation, irrespective of their actual relevance to the user's underlying interests.
\item \textbf{Output-Side Length Bias in Constrained Decoding.} During the decoding, items are retrieved by scoring their textual realizations using the autoregressive log-likelihood. Since log-probabilities are non-positive, longer sequences accumulate more negative terms and therefore tend to receive lower scores (\cf Figure~\ref{fig:observe_lp1}). A widely adopted remedy is \textit{length normalization}, which discounts scores by their token length. However, we find this strategy fails in  recommendation scenarios --- biasing recommendations toward longer items and degrading overall accuracy. The key reason is that tokens are not equally informative under Trie-constrained decoding: some tokens appear at high-branching Trie nodes and thus carry greater uncertainty, whereas others appear at low-branching nodes and are much easier to predict. Since longer items often contain a higher proportion of low-branching, high-probability tokens (\cf Figure~\ref{fig:decision-on-tree}), naively dividing by the total token length can artificially inflate the scores of longer items.
\end{itemize}
These findings motivate our core research question: \textbf{How can we mitigate length bias in LLM-based recommenders?}

To address these challenges, we propose \textbf{LBR} (\textbf{L}ength \textbf{B}ias \textbf{R}eduction), a lightweight and model-agnostic framework that mitigates length bias at both stages. 
For attention, LBR performs \textit{Length-Aware Attention Calibration} by introducing a length-dependent offset into the attention logits, dynamically neutralizing the attention skew incurred by heterogeneous item lengths.
For decoding, LBR introduces \textit{Effective Information Length Normalization}. Departing from naive token-count length, this module quantifies token informativeness using the branching factor at each step. Specifically, LBR quantifies token-level informativeness based on Hartley entropy \cite{aczel1974shannon} derived from its branch width, and aggregates these values as an information-theoretic length surrogate for each item. This metric explicitly accounts for differences in branching factors and can be interpreted as the number of equivalent binary decisions required to identify an item. By normalizing scores with this information-aware length, LBR enables a fairer comparison among candidates with heterogeneous textual lengths and decoding structures.

Notably, LBR is easy to implement and computationally efficient, requiring only the addition of a lightweight offset term to the attention computation and to each decoding step. Moreover, it can be seamlessly integrated into a variety of LLM-based recommender systems, improving both accuracy and fairness. Extensive experiments on three real-world datasets and two representative LLM-based recommendation backbones show that LBR substantially mitigates length bias while consistently enhancing recommendation accuracy and fairness, with negligible additional training and inference overhead.

In summary, our contributions are as follows:
\begin{itemize}[leftmargin=*]
\item We identify and analyze the length bias present in LLM-based recommender systems and show that it can negatively affect both recommendation accuracy and fairness.
\item We propose LBR, an efficient and plug-and-play debiasing method that mitigates length bias in both attention and decoding.
\item We conduct extensive experiments on three real-world datasets, demonstrating that LBR achieves state-of-the-art performance while effectively alleviating length bias.
\end{itemize}

\section{Preliminaries}
This section presents the essential background of LLM-based recommendation systems (Subsection~\ref{sec:pre_LLM4Rec}) and attention mechanisms (Subsection~\ref{sec:pre_attention}).

\subsection{LLM-based Recommendation}
\label{sec:pre_LLM4Rec}
Following prior work~\cite{LLaRA, bao2025bigrec, na2024enhancing, bao2023tallrec, lin2024rella, zheng2024adapting}, we study sequential recommendation (SR), a widely deployed setting in real-world recommender systems. Let $\mathcal{I}$ denote the item set. Given a user’s historical interaction sequence $S = \{ i_{1}, i_{2}, \dots, i_{n} \}$, where $i_t \in \mathcal I$ denotes the $t$-th interacted item, the goal is to predict the next item $i_{n+1}$ that the user is most likely to interact with. 

Motivated by the strong semantic understanding and reasoning capabilities of large language models (LLMs) \cite{dubey2024llama, achiam2023gpt, guo2025deepseek, team2024qwen2}, recent studies have explored LLMs for SR by reformulating recommendation as a language generation problem. Such methods typically comprise two components:

\textit{(1) Prompt construction}: As illustrated in Figure~\ref{fig:attention}, an LLM-based recommender first represents each item \(i_t\) using its textual description (e.g., title), \(\mathbf{x}_t=(x_{t,1},x_{t,2},\ldots,x_{t,|\mathbf{x}_t|})\), where \(x_{t,j}\) is the \(j\)-th token and \(|\mathbf{x}_t|\) is the token sequence length. It then flattens and concatenates item descriptions, optionally interleaved with auxiliary tokens (e.g., task instructions, separators), to construct a prompt:
$$
\mathbf{X}=\big[\ldots, (x_{1,1},x_{1,2},\ldots,x_{1,|\mathbf{x}_1|}), \ldots, (x_{t,1},x_{t,2},\ldots,x_{t,|\mathbf{x}_t|}), \ldots \big].
$$

\textit{(2) Item generation:} Given the prompt \(\mathbf{X}\), the LLM autoregressively generates the token sequence of the predicted item \(y=(y_1,\ldots,y_{|y|})\) according to the conditional probability $P_{\theta}(y_k | y_{< k}, \mathbf{X})$,
where $y_k$ is the $k$-th token of $y$, $y_{<k}$ denotes the prefix tokens preceding $y_k$, and $\theta$ denotes model parameters. 
During decoding, constrained beam search is often used to improve efficiency and ensure validity. Specifically, it extends partial sequences by selecting high-probability next tokens while enforcing that each prefix corresponds to a valid item in the system. This procedure can be interpreted as sequential decision-making over a prefix tree (Trie), where generation corresponds to traversing from the root to a leaf node.

\subsection{Attention Mechanism}
\label{sec:pre_attention}
Modern LLMs are typically built on the Transformer architecture, whose core operation is the self-attention mechanism \cite{vaswani2017attention}. Given an input sequence $\mathbf{X}=(x_1,\ldots,x_L)$, each token $x_u$ aggregates information from other tokens $x_v$ using attention weights $\mathbf{A}_{uv}$:
\begin{equation}
\mathbf{O}_u = \sum_{v=1}^{L} \mathbf{A}_{uv}\,\mathbf{V}_v
\end{equation}
\begin{equation}
\label{eq:attention}
\mathbf{A}_{uv} = \mathrm{Softmax}_v\!\left(\frac{\mathbf{Q}_u\mathbf{K}_v^\top}{\sqrt{d}}+\mathbf{M}_{uv}\right).
\end{equation}
Here, \(\mathbf{O}_u\) represents the output representation of token \(x_u\) after integrating information from other tokens. The query, key, and value matrices \((\mathbf{Q},\mathbf{K},\mathbf{V})\) are obtained from the input token embeddings via learned linear projections, and \(d\) denotes the embedding dimension. During autoregressive generation, the causal mask \(\mathbf{M}\) prevents the token from attending to future positions. Intuitively, self-attention captures contextual dependencies by assigning context-dependent weights to other tokens, thereby modeling token-level interactions across the sequence.

\section{Length Bias in LLM-based RS}

LLM-based recommenders model user preferences at the token level by encoding item texts and exploiting self-attention and autoregressive decoding. While this paradigm has shown promising results \cite{wang2025msl, bao2025bigrec}, it introduces a critical side effect: \textit{\textbf{Length Bias}}. In real-world platforms, items exhibit substantial variation in textual description length (e.g., titles and summaries), which can systematically bias both (i) attention-based preference modeling and (ii) decoding-based item scoring. Such biases may degrade recommendation accuracy and raise fairness concerns. We detail these two effects below.

\subsection{Length Bias in Attention}
\label{subsec:analyze_attention}

LLM-based recommenders represent each historical item $i_t$ as a token sequence $\mathbf{x}_t$ with length $|\mathbf{x}_t|$, and then model fine-grained token interactions via self-attention to infer user preference. However, the lengths $|\mathbf{x}_t|$ can vary widely across items  --- an ubiquitous and often unavoidable phenomenon, since different items naturally require different amounts of text to describe their semantics. This variation can induce a \textit{length advantage} in attention: items with longer sequences contribute more tokens to the input and therefore provide more opportunities to receive attention. Consequently, an item may exert a larger influence on the model prediction simply due to having more tokens, rather than being more relevant to the user’s preference. 

To verify this intuition, we quantify each item's attention effect by accumulating the attention weights assigned to its tokens when predicting the next item. 
Figure~\ref{fig:attention_length_bias_analysis} shows a clear positive correlation between item length and cumulative attention, indicating that longer items tend to exert disproportionately larger influence on predictions. 
We also empirically verified this via a feature ablation study \cite{miglani2023using} measuring each historical item's cumulative token effect on the generated recommendation (detailed in Appendix~\ref{appsec:captum_attr_analysis}).
This passive amplification can distort preference inference and skew recommendation outcomes.

\begin{figure}[t]
  \centering
  \includegraphics[width=\linewidth]{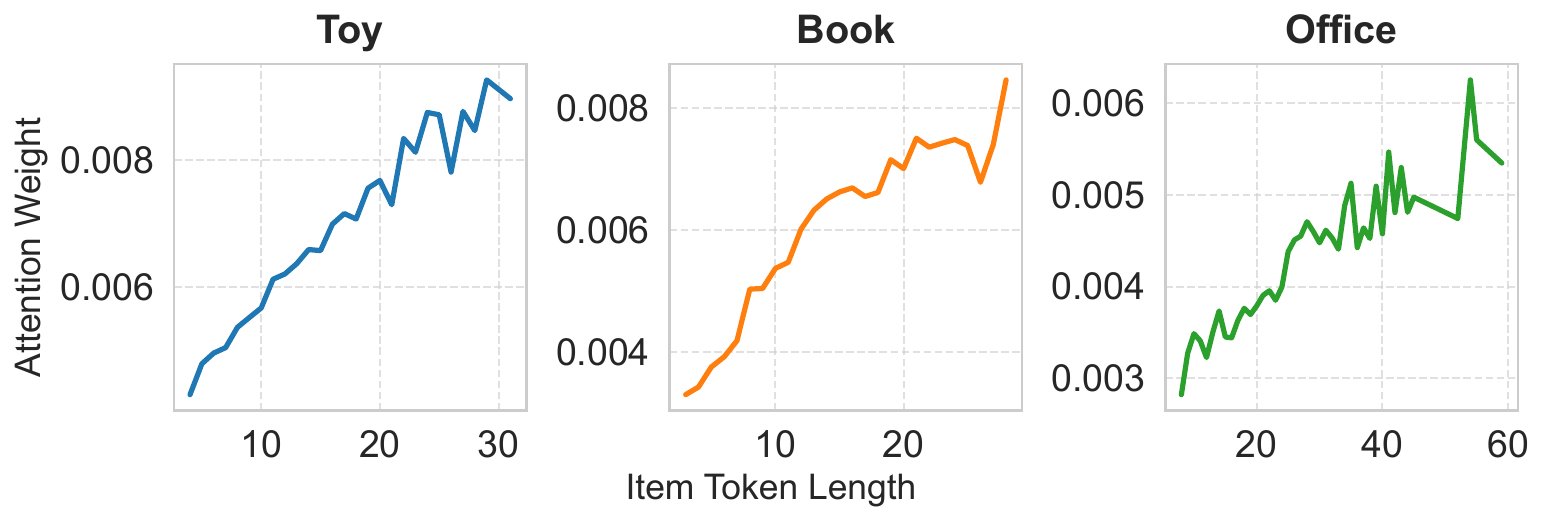}
\caption{ Illustration of the relationship between an item’s cumulative attention weight in prediction and its token length.}  
  
  \Description{ The correlation between item token length and attention weights. }
  \label{fig:attention_length_bias_analysis}
\end{figure}

\textbf{Limitations of naive strategies.} A straightforward mitigation is to enforce equal-length item representations, for example via padding/truncation or by summarizing item descriptions to a fixed token budget. However, these operations may alter or remove semantically critical content and disrupt the original textual structure, leading to degraded recommendation quality. Our experiments in Appendix~\ref{appsec:result_naive} confirm that such preprocessing yields no consistent gains and can even hurt performance. Another potential solutions are to tokenize item text into discrete semantic IDs (e.g., via quantization methods such as RQ-VAE~\cite{rajput2023recommender}). While this reduces length variation, it weakens the advantages of LLMs: discrete codes are less interpretable, may hinder direct semantic understanding in natural language, and typically require substantial additional training data to learn the mapping between codes and semantics. We have also test such strategies in the standard LLM-based recommendation experimental setting and observe suboptimal performance (\cf Appendix~\ref{appsec:result_naive}).

\subsection{Length Bias in Decoding}
\label{sec:analyze_decoding}
In LLM-based recommendation, items are generated by scoring their textual realizations using the autoregressive log-likelihood with:
\begin{equation}
\label{eq:score_with_no_lp}
s(y) = \sum_{k = 1}^{|y|} \log P_{\theta}(y_k | y_{< k}, \mathbf{X})
\end{equation}
This scoring function inherently favors shorter items. Because log-probabilities are non-positive, longer sequences accumulate more negative terms and therefore tend to receive lower scores (cf. Figure~\ref{fig:observe_lp0}). This length bias can affect not only accuracy but also fairness and broader ecosystem incentives (e.g., by encouraging providers to adopt unnaturally short titles).
We also examine item interaction frequencies across length groups to verify that popularity does not affect our findings (detailed in Appendix~\ref{appsec:popularity_analysis}).

\textbf{Limitations of length normalization.} A common remedy is introducing length normalization (\aka length penalty):
\begin{equation}
s(y) = \frac{1}{|y|^\alpha} \sum_{k = 1}^{|y|} \log P_{\theta}(y_k | y_{< k}, \mathbf{X})
\end{equation}
where \(\alpha\) is a hyperparameter that controls the strength of the penalty.  
Intuitively, this transformation converts the cumulative score into an average-like score. However, we find that this heuristic can bias recommendations toward longer items (\cf Figure~\ref{fig:observe_lp1}) and may even degrade recommendation accuracy even when $\alpha$ approaching $0$ (\cf Appendix~\ref{appsec:result_tuning_alpha}).

We attribute this failure to a key oversight: \textit{length normalization treats all tokens as equally informative}. In LLM-based recommendation, decoding is often constrained by a prefix tree (Trie) over valid item tokens. Under such constraints, token prediction difficulty varies substantially across Trie nodes: some positions have many feasible tokens (high branching factor) and are inherently uncertain, while others have only a few options and are easy to predict with high probability. Empirically, tokens at low-branching nodes often receive higher probabilities, whereas tokens at high-branching nodes receive much lower probabilities (\cf Figure~\ref{fig:avg_probs_k}). Longer item often contain a higher fraction of ``easy'' tokens. Therefore, dividing by $|y|^\alpha$ without accounting for token informativeness can unintentionally inflate the scores of long items.

\textbf{Limitations of D3 \cite{bao2024decoding}.}  
A related work D3 also attempts to mitigate the issue that length normalization tends to amplify the scores of items containing many tokens with probabilities close to 1 (``ghost tokens''). D3 empirically observes that, after removing such ghost tokens, the effective lengths of all items become nearly identical. Based on this observation, D3 adopts a simple strategy that completely removes length normalization by setting $\alpha$ = 0.
However, this overly coarse approach fails to effectively alleviate length bias. As reflected in our experiments and also reported in D3, the best recommendation performance is often achieved when $\alpha$ is close to 0, under which substantial length bias still remains (\cf Figure~\ref{fig:observe_lp0}). Moreover, D3 still does not explicitly distinguish or quantify the informativeness of tokens at multi-branch nodes, despite the fact that such tokens are prevalent and play a crucial role in score computation.

\begin{figure}[t]
  \centering
  \includegraphics[width=\linewidth]{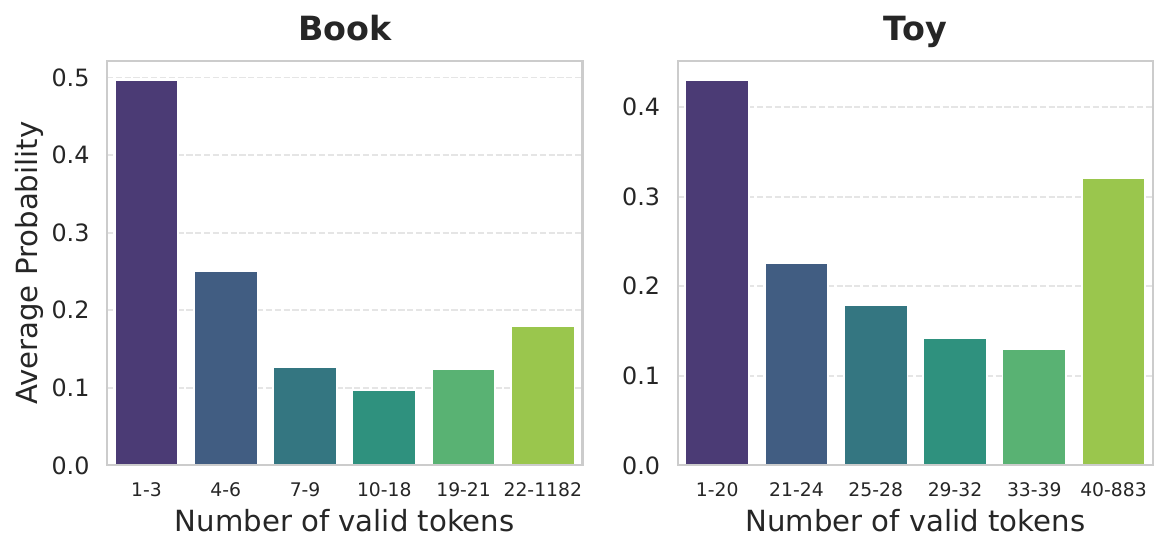}
  \caption{ 
    Average probability distribution across number of valid tokens.
    Tokens at low-branching nodes often receive high probabilities.}
  \label{fig:avg_probs_k}
\end{figure}






\begin{figure*}[t]
  \centering

  \begin{subfigure}{0.48\textwidth}
    \centering
    \includegraphics[width=\linewidth]{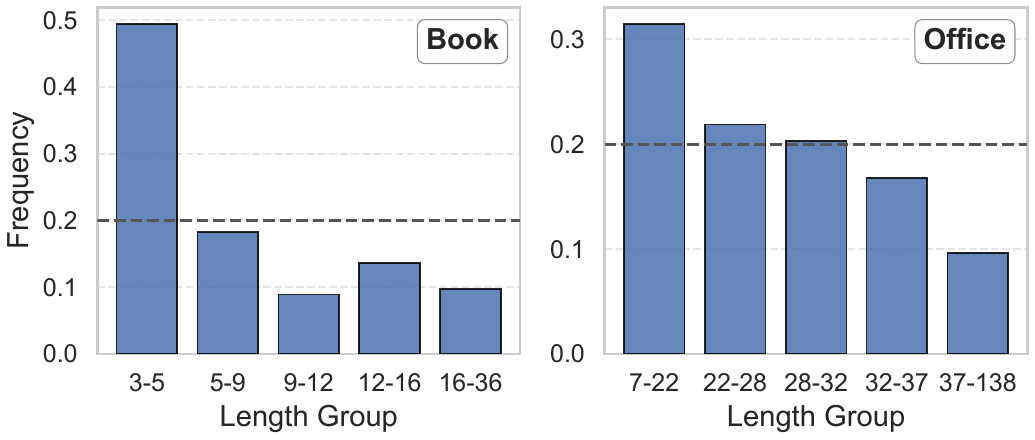}
    \caption{Without length penalty}
    \label{fig:observe_lp0}
  \end{subfigure}
  \hfill 
  \begin{subfigure}{0.48\textwidth}
    \centering
    \includegraphics[width=\linewidth]{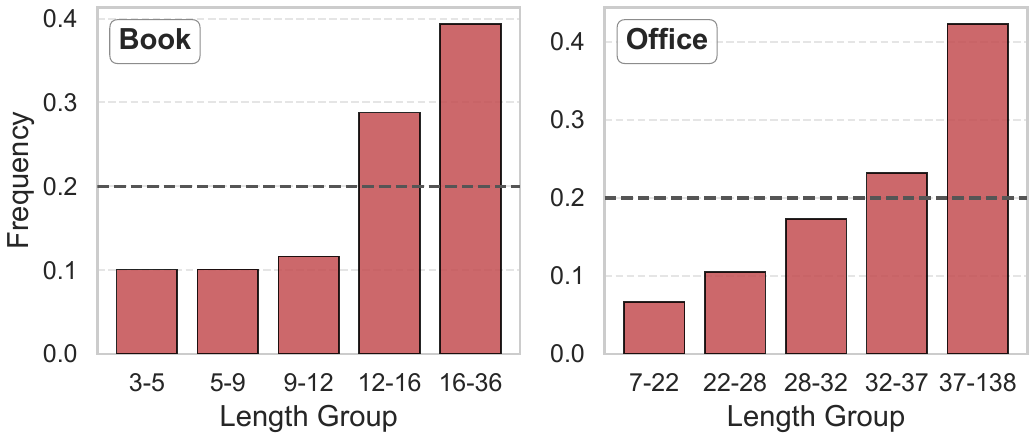}
    \caption{With standard length penalty}
    \label{fig:observe_lp1}
  \end{subfigure}
  \vspace{-0.2cm} 
  \caption{ 
    Illustration of length bias in decoding. Items are partitioned into five groups by token length, with each group containing the same cumulative item frequency in the test set. The dashed line denotes the ideal exposure distribution of each group implied by the test data. (a) Without length penalty, the model is strongly biased toward shorter items. (b) Standard length normalization shifts the bias toward longer items.
  }
  \Description{ Analysis of Length Bias on decoding. }
\end{figure*}

\section{Methodology}

In this section, we present \textbf{LBR} (\textbf{L}ength \textbf{B}ias \textbf{R}eduction), a novel framework designed to mitigate length bias in both the attention-based preference modeling and the decoding-based item scoring.

\subsection{Length-Aware Attention Calibration}
\label{subsec:method_attn_bias}

Section~\ref{subsec:analyze_attention} shows that, in LLM-based SR, an item's cumulative attention weight tends to increase with the length of its textual description, producing an undesirable positive correlation between attention influence and token length. For convenience, we denote such inherent bias as $g(l)$, a mapping function from item length $l$ to the expected cumulative attention mass (\ie the length–attention trend). 

To counteract this bias, we introduce a length-dependent offset term into the attention logits, which downweights tokens from long items and upweights tokens from short items. Formally, we modify the attention weight from Eq.\ref{eq:attention} into:
\begin{equation}
\mathbf{A}'_{uv}=\mathrm{Softmax}_v\!\left(\frac{\mathbf{Q}_u \mathbf{K}_v^\top}{\sqrt{d}}+\mathbf{M}_{uv}+{\color{red}\delta(l_v)}\right).
\end{equation}
where $l_v$ denotes the length of the item to which token $v$ belongs. For those auxiliary tokens, we simply set $l_v$ as the average lengths of the items on the historical interactions. We further set:
\begin{equation}
    \delta(l_v) = -\log(g(l_v))
\end{equation}

Intuitively, because softmax applies an exponential transformation to logits, subtracting $\log(g(l_v))$ in the logit space approximately cancels the multiplicative ``length advantage'' induced by $g(l_v)$,  thereby reducing the correlation between cumulative attention mass and item length. In Appendix~\ref{appsec:laac_invariance}, we provide an theoretical analysis showing that, under mild assumptions, the expected cumulative attention weights becomes approximately length-invariant after calibration (\cf Lemma~\ref{lem:laac_invariance}). Empirically, we observe that this calibration substantially reduces attention length bias (\cf Section~\ref{sec:exp_depth_analyze}).


In practice, we model $g(l)$ as a simple linear function, \ie $g(l)=a l+b$. The reasons are multiple folds: 
(1) it well approximates the linear trend observed in Figure~\ref{fig:attention_length_bias_analysis}
and
(2) it ensures ease of implementation and optimization. Although more complex non-linear models possess higher expressivity, we found they provide negligible empirical gains over the linear counterpart.

Now the question lies in how to learn this function. We could estimate it directly from historical training data as depicted in Figure~\ref{fig:attention_length_bias_analysis}, but this inevitably requires an additional training stage. In practice, we jointly optimize the parameters of this function and the recommendation model through end-to-end training, guided by the primary recommendation objective. We find that to achieve better recommendation performance, the model adaptively learns an effective $g(l)$ to alleviate the bias issue.


\useunder{\uline}{\ul}{}
\begin{table*}
\caption{ The performance comparison on three real-world datasets.
The LLM-based RS methods are applied ("+") on the BIGRec and LLaRA backbone.
The best result is bolded. "{\color[HTML]{DA5F00} Imp.\%}" denotes the improvement of LBR over the best baseline. "N" represents NDCG, and "H" represents Hit Ratio. The improvements are statistically significant (p<0.05).}
\label{tab:cmp_exp}
\begin{adjustbox}{width=\textwidth}
\begin{tabular}{@{}lcccccccccccc@{}}
\toprule
\multicolumn{1}{c}{}                                             & \multicolumn{4}{c}{\textbf{Toy}}                                                                                                                                                          & \multicolumn{4}{c}{\textbf{Office}}                                                                                                                                                        & \multicolumn{4}{c}{\textbf{Book}}                                                                                                                                   \\ \cmidrule(l){2-13} 
\multicolumn{1}{c}{\multirow{-2}{*}{\textbf{Methods}}}           & \textbf{N@5}                            & \textbf{N@10}                           & \textbf{H@5}                            & \multicolumn{1}{c|}{\textbf{H@10}}                          & \textbf{N@5}                            & \textbf{N@10}                           & \textbf{H@5}                            & \multicolumn{1}{c|}{\textbf{H@10}}                           & \textbf{N@5}                            & \textbf{N@10}                          & \textbf{H@5}                            & \textbf{H@10}                          \\ \midrule
\multicolumn{1}{l|}{SASRec}                                      & 0.0089                                  & 0.0109                                  & 0.0148                                  & \multicolumn{1}{c|}{0.0211}                                 & 0.0127                                  & 0.0196                                  & 0.0223                                  & \multicolumn{1}{c|}{0.0447}                                  & 0.0061                                  & 0.0080                                 & 0.0100                                  & 0.0157                                 \\
\multicolumn{1}{l|}{DROS}                                        & 0.0105                                  & 0.0126                                  & 0.0148                                  & \multicolumn{1}{c|}{0.0213}                                 & 0.0195                                  & 0.0240                                  & 0.0265                                  & \multicolumn{1}{c|}{0.0405}                                  & 0.0072                                  & 0.0096                                 & 0.0116                                  & 0.0192                                 \\ \midrule
\multicolumn{1}{l|}{LLM-CF}                                      & 0.0077                                  & 0.0101                                  & 0.0148                                  & \multicolumn{1}{c|}{0.0223}                                 & 0.0161                                  & 0.0200                                  & 0.0306                                  & \multicolumn{1}{c|}{0.0426}                                  & 0.0077                                  & 0.0100                                 & {\ul 0.0141}                            & {\ul 0.0212}                           \\
\multicolumn{1}{l|}{DLLM2Rec}                                    & 0.0080                                  & 0.0110                                  & 0.0156                                  & \multicolumn{1}{c|}{0.0248}                                 & 0.0166                                  & 0.0231                                  & 0.0322                                  & \multicolumn{1}{c|}{0.0530}                                  & 0.0068                                  & 0.0094                                 & 0.0112                                  & 0.0194                                 \\ \midrule
\multicolumn{1}{l|}{BIGRec}                                      & 0.0123                                  & 0.0155                                  & 0.0209                                  & \multicolumn{1}{c|}{0.0305}                                 & 0.0105                                  & 0.0226                                  & 0.0213                                  & \multicolumn{1}{c|}{0.0587}                                  & 0.0062                                  & 0.0090                                 & 0.0116                                  & 0.0201                                 \\
\multicolumn{1}{l|}{+D3}                                         & 0.0130                                  & 0.0162                                  & 0.0204                                  & \multicolumn{1}{c|}{0.0302}                                 & 0.0142                                  & 0.0237                                  & 0.0291                                  & \multicolumn{1}{c|}{0.0592}                                  & 0.0058                                  & 0.0081                                 & 0.0110                                  & 0.0180                                 \\
\multicolumn{1}{l|}{+CFT}                                        & 0.0138                                  & 0.0173                                  & 0.0217                                  & \multicolumn{1}{c|}{0.0325}                                 & 0.0183                                  & 0.0272                                  & 0.0306                                  & \multicolumn{1}{c|}{0.0582}                                  & 0.0060                                  & 0.0092                                 & 0.0103                                  & 0.0205                                 \\
\multicolumn{1}{l|}{+S-DPO}                                      & 0.0181                                  & 0.0227                                  & {\ul 0.0288}                            & \multicolumn{1}{c|}{{\ul 0.0432}}                           & {\ul 0.0209}                            & {\ul 0.0292}                            & {\ul 0.0374}                            & \multicolumn{1}{c|}{{\ul 0.0629}}                            & {\ul 0.0082}                            & {\ul 0.0102}                           & 0.0125                                  & 0.0189                                 \\
\multicolumn{1}{l|}{+Reweight}                                   & 0.0084                                  & 0.0121                                  & 0.0146                                  & \multicolumn{1}{c|}{0.0261}                                 & 0.0053                                  & 0.0122                                  & 0.0125                                  & \multicolumn{1}{c|}{0.0338}                                  & 0.0051                                  & 0.0070                                 & 0.0100                                  & 0.0160                                 \\
\multicolumn{1}{l|}{+IGD}                                        & {\color[HTML]{000000} {\ul 0.0191}}     & {\color[HTML]{000000} {\ul 0.0239}}     & {\color[HTML]{000000} 0.0273}           & \multicolumn{1}{c|}{{\color[HTML]{000000} 0.0421}}          & {\color[HTML]{000000} 0.0187}           & {\color[HTML]{000000} 0.0266}           & {\color[HTML]{000000} 0.0312}           & \multicolumn{1}{c|}{{\color[HTML]{000000} 0.0566}}           & {\color[HTML]{000000} 0.0064}           & {\color[HTML]{000000} 0.0084}          & {\color[HTML]{000000} 0.0110}           & {\color[HTML]{000000} 0.0171}          \\
\multicolumn{1}{l|}{{\color[HTML]{000000} \textbf{+LBR}}}        & {\color[HTML]{000000} \textbf{0.0195}}  & {\color[HTML]{000000} \textbf{0.0241}}  & {\color[HTML]{000000} \textbf{0.0315}}  & \multicolumn{1}{c|}{{\color[HTML]{000000} \textbf{0.0459}}} & {\color[HTML]{000000} \textbf{0.0224}}  & {\color[HTML]{000000} \textbf{0.0321}}  & {\color[HTML]{000000} \textbf{0.0457}}  & \multicolumn{1}{c|}{{\color[HTML]{000000} \textbf{0.0758}}}  & {\color[HTML]{000000} \textbf{0.0098}}  & {\color[HTML]{000000} \textbf{0.0112}} & {\color[HTML]{000000} \textbf{0.0167}}  & {\color[HTML]{000000} \textbf{0.0213}} \\ \midrule
\multicolumn{1}{l|}{{\color[HTML]{DA5F00} \textbf{Improvement}}} & {\color[HTML]{DA5F00} \textbf{2.09\%}}  & {\color[HTML]{DA5F00} \textbf{0.84\%}}  & {\color[HTML]{DA5F00} \textbf{9.38\%}}  & \multicolumn{1}{c|}{{\color[HTML]{DA5F00} \textbf{6.25\%}}} & {\color[HTML]{DA5F00} \textbf{7.18\%}}  & {\color[HTML]{DA5F00} \textbf{9.93\%}}  & {\color[HTML]{DA5F00} \textbf{22.19\%}} & \multicolumn{1}{c|}{{\color[HTML]{DA5F00} \textbf{20.51\%}}} & {\color[HTML]{DA5F00} \textbf{19.51\%}} & {\color[HTML]{DA5F00} \textbf{9.80\%}} & {\color[HTML]{DA5F00} \textbf{18.44\%}} & {\color[HTML]{DA5F00} \textbf{0.47\%}} \\ \midrule
\multicolumn{1}{l|}{LLaRA}                                       & {\color[HTML]{000000} 0.0136}           & {\color[HTML]{000000} 0.0169}           & {\color[HTML]{000000} 0.0213}           & \multicolumn{1}{c|}{{\color[HTML]{000000} 0.0313}}          & {\color[HTML]{000000} 0.0108}           & {\color[HTML]{000000} 0.0236}           & {\color[HTML]{000000} 0.0218}           & \multicolumn{1}{c|}{{\color[HTML]{000000} 0.0613}}           & {\color[HTML]{000000} 0.0073}           & {\color[HTML]{000000} 0.0096}          & {\color[HTML]{000000} 0.0128}           & {\color[HTML]{000000} 0.0201}          \\
\multicolumn{1}{l|}{+D3}                                         & 0.0137                                  & 0.0172                                  & 0.0209                                  & \multicolumn{1}{c|}{0.0315}                                 & 0.0116                                  & 0.0240                                  & 0.0234                                  & \multicolumn{1}{c|}{{\ul 0.0618}}                            & 0.0066                                  & 0.0094                                 & 0.0112                                  & 0.0194                                 \\
\multicolumn{1}{l|}{+CFT}                                        & 0.0151                                  & 0.0179                                  & 0.0232                                  & \multicolumn{1}{c|}{0.0317}                                 & {\ul 0.0168}                            & {\ul 0.0254}                            & {\ul 0.0337}                            & \multicolumn{1}{c|}{0.0597}                                  & {\ul 0.0077}                            & {\ul 0.0102}                           & {\ul 0.0144}                            & {\ul 0.0210}                           \\
\multicolumn{1}{l|}{+S-DPO}                                      & {\ul 0.0167}                            & {\ul 0.0219}                            & {\ul 0.0286}                            & \multicolumn{1}{c|}{{\ul 0.0446}}                           & 0.0101                                  & 0.0216                                  & 0.0213                                  & \multicolumn{1}{c|}{0.0577}                                  & 0.0050                                  & 0.0068                                 & 0.0098                                  & 0.0153                                 \\
\multicolumn{1}{l|}{+Reweight}                                   & 0.0094                                  & 0.0133                                  & 0.0163                                  & \multicolumn{1}{c|}{0.0286}                                 & 0.0106                                  & 0.0185                                  & 0.0234                                  & \multicolumn{1}{c|}{0.0478}                                  & 0.0066                                  & 0.0083                                 & 0.0121                                  & 0.0173                                 \\
\multicolumn{1}{l|}{+IGD}                                        & 0.0137                                  & 0.0174                                  & 0.0202                                  & \multicolumn{1}{c|}{0.0317}                                 & 0.0099                                  & 0.0212                                  & 0.0229                                  & \multicolumn{1}{c|}{0.0582}                                  & 0.0068                                  & 0.0097                                 & 0.0114                                  & 0.0203                                 \\
\multicolumn{1}{l|}{\textbf{+LBR}}                               & \textbf{0.0208}                         & \textbf{0.0253}                         & \textbf{0.0334}                         & \multicolumn{1}{c|}{\textbf{0.0474}}                        & \textbf{0.0213}                         & \textbf{0.0300}                         & \textbf{0.0353}                         & \multicolumn{1}{c|}{\textbf{0.0623}}                         & \textbf{0.0093}                         & \textbf{0.0111}                        & \textbf{0.0157}                         & \textbf{0.0212}                        \\ \midrule
\multicolumn{1}{l|}{{\color[HTML]{DA5F00} \textbf{Improvement}}} & {\color[HTML]{DA5F00} \textbf{24.55\%}} & {\color[HTML]{DA5F00} \textbf{15.53\%}} & {\color[HTML]{DA5F00} \textbf{16.78\%}} & \multicolumn{1}{c|}{{\color[HTML]{DA5F00} \textbf{6.28\%}}} & {\color[HTML]{DA5F00} \textbf{26.79\%}} & {\color[HTML]{DA5F00} \textbf{18.11\%}} & {\color[HTML]{DA5F00} \textbf{4.75\%}}  & \multicolumn{1}{c|}{{\color[HTML]{DA5F00} \textbf{0.81\%}}}  & {\color[HTML]{DA5F00} \textbf{20.78\%}} & {\color[HTML]{DA5F00} \textbf{8.82\%}} & {\color[HTML]{DA5F00} \textbf{9.03\%}}  & {\color[HTML]{DA5F00} \textbf{0.95\%}} \\ \bottomrule
\end{tabular}
\end{adjustbox}
\end{table*}

\begin{table}[t]
  \caption{Dataset statistics}
  \label{table:dataset_statistics}
\begin{tabular}{c|cccc}
\toprule
\textbf{Dataset} & \textbf{\#Users} & \textbf{\#Items} & \textbf{\#Interactions} & \textbf{Density} \\
\midrule
\textbf{Toy}     & 19124 & 11758 & 165247 & 0.0735\% \\
\textbf{Office}  & 4895            & 2414            & 53149                  & 0.4498\%         \\
\textbf{Book}    & 16559 & 6344 & 151928 & 0.1446\% \\
\bottomrule
\end{tabular}
\end{table}

\subsection{Decoding with Effective Information Length}

Section~\ref{sec:analyze_decoding} shows that decoding based on cumulative log-likelihood systematically favors short items, whereas conventional length normalization conversely skew recommendation toward long items and even degrade recommendation performance. We attribute this failure to a key oversight: tokens differ in informativeness under Trie-Constrained decoding. Some positions admit many valid tokens (high branching factor), including high uncertainty and high information content, while others are nearly deterministic (low branching factor) and carry little information. Normalizing by the raw token count ignores this heterogeneity and can disproportionately benefit items (\eg longer ones) that contain a high proportion of  low-branch tokens.

To tackle this issue, we first define a token information score using Hartley entropy \cite{aczel1974shannon} to quantify the token-level informativeness:
\begin{equation}
\label{eq:entropy}
\mu_k = \log_2 |\mathcal{V}_k| .
\end{equation}
where $\mathcal{V}_k$ denotes the set of valid tokens at step $k$ under the Trie constraint. This choice is motivated by information theory: the maximum information content of a discrete decision is bounded by the size of its feasible decision set.

We then replace physical length with \textit{Effective Information Length (EIL)}. For a candidate item sequence \(y=(y_1,y_2,\ldots,y_{|y|})\), we define:
\begin{equation}
U_y = \sum_{k=1}^{|y|} \mu_k 
    = \sum_{k=1}^{|y|} \log_2 |\mathcal{V}_k| .
\end{equation}
which measures the total uncertainty (in bits) that must be resolved along the Trie path to uniquely identify $y$. Using $U_y$ as a normalization factor yields a score that reflects confidence \textbf{per unit of information}, rather than per token. 

Finally, we compute an information-normalized score as:
\begin{equation}
s_{\mathrm{EIF}}(y) 
= \frac{1}{U_y}\sum_{k=1}^{|y|}\mu_k \cdot \log P_\theta\!\left(y_k \mid y_{<k}, X\right).
\end{equation}
which can be interpreted as the average log-likelihood per bit of information content. Compared with the standard cumulative log-likelihood, this formulation (i) normalizes by effective information length $U_y$ rather than $|y|$, thereby accounting for token-level differences in information content and prediction difficulty; and (ii)  weights each token’s log-probability by its informativeness $\mu_k$.  This design matches our intuition: a highly branched node (large valid tokens set) corresponds to a more informative decision in and should be more important, whereas near-deterministic nodes are less informative and should contribute less.

\textbf{Intuition via a binary-decision view.} An equivalent intuition is to transform the original Trie into an implicit binary decision tree: choosing among $H$ branches requires approximately $\log_2 H$  binary decisions.  Accordingly, a Trie node with branching factor $H$ can be viewed as expanding into an equivalent subtree of depth $\log_2 H$ in the binary-decision tree. Under this view, Eq.~\ref{eq:entropy} measures the per-step ``binary decision depth”, and $U_y$  becomes the effective decision length of item $|y|$ measured in uniform binary choices (\cf Appendix~\ref{appsec:bin-tree}).


\subsection{Practical Advantages}

\textbf{Easy to implement and model-agnostic.} LBR only modifies (i) attention logits via an additive offset and (ii) decoding scores via information-based weights and normalization, requiring minimal code changes. It can be integrated into a broad range of LLM-based backbones. In our experiments, LBR consistently improves representative backbones (e.g., BIGRec \cite{bao2025bigrec} and LLaRA \cite{LLaRA}) (\cf Section~\ref{sec:cmp_exp}).

\textbf{Efficiency.} The calculation of additive offset $\delta(l_v)$ and information score $\mu_k$ is highly efficient with only $O(N)$ complexity, where $N$ denotes the total number of tokens in both the input and the generated output. Empirical results (\cf Section~\ref{sec:exp_efficiency}) further show that LBR introduces almost no additional cost, demonstrating its practical efficiency.
\section{Experiments}
We aim to answer the following research questions:
\begin{itemize}[left=5pt]
\item {RQ1: How does LBR perform compared to SOTA methods?} 
\item {RQ2: How do different components of LBR affect?} 
\item {RQ3: Does LBR mitigate length bias?}
\item {RQ4: How does LBR perform compared with baselines in terms of efficiency?}
\end{itemize}

\subsection{Experimental Settings}
\subsubsection{Datasets}
Three widely used real-world datasets—\textit{Amazon Toys and Games}, \textit{Amazon Office Products}, and \textit{Amazon Books}\footnote{\url{https://jmcauley.ucsd.edu/data/amazon/index_2014.html}}—are employed in our experiments. 
These datasets have been extensively utilized in prior research on LLM-based RS \cite{cui2024distillation, bao2025bigrec, cao2024aligning, li2023e4srec, wang2025msl, lee2024star}. 
To guarantee a fair comparison with existing methods, we adopt the same data preprocessing protocols as those reported in recent studies \cite{bao2025bigrec, cui2024distillation, wang2025msl}. 
Specifically, we first apply the 5-core setting to the original datasets. User interaction sequences containing more than 11 interactions are divided using a sliding window with a fixed length of 11. The resulting subsequences are ordered chronologically and subsequently split into training, validation, and test sets according to an 8:1:1 ratio. 
Due to the large size of \textit{Amazon Books}, we randomly sample 100,000 items prior to 5-core processing. 
The statistics of the processed datasets are summarized in Table \ref{table:dataset_statistics}.


\begin{table}[t]
  \caption {
    Ablation study of LBR on NDCG@5. “w/o LAAC”, “w/o EILN”, and “w/o Both” indicate removing Length-Aware Attention Calibration, Effective Information Length Normalization, and both components, respectively.
  }
  \label{table:ablation}
  \begin{tabular}{c|cccc}
  \toprule
  \textbf{} & \textbf{Toy} & \textbf{Office} & \textbf{Book} \\
  \midrule
  LBR & 0.0195 & 0.0224 & 0.0098 \\
  w/o LAAC & 0.0192 & 0.0218 & 0.0093 \\
  w/o EILN & 0.0130 & 0.0124 & 0.0073 \\
  w/o Both & 0.0123 & 0.0105 & 0.0062 \\
  \bottomrule
  \end{tabular}
\end{table}

\subsubsection{Baselines}
The methods compared fall into several categories:
\begin{itemize}[left=5pt]
  \item \textbf{Traditional RS}: Traditional SR methods, including \underline{SASRec} (ICDM ’18) \cite{kang2018self} and \underline{DROS} (SIGIR ’23) \cite{yang2023generic}.
  \item \textbf{LLM-enhanced RS}: Conventional RS models augmented with LLMs as semantic encoders or knowledge enhancers, including \underline{DLLM2Rec} (RecSys ’24) \cite{cui2024distillation} and \underline{LLM-CF} (CIKM ’24) \cite{sun2024large}.
  \item \textbf{LLM-based RS}: Methods that directly employ LLMs for recommendation, including \underline{BIGRec} (TORS ’25) \cite{bao2025bigrec}, \underline{CFT} (CoRR ’24) \cite{zhang2024causality}, \underline{LLaRA} (SIGIR ’24) \cite{LLaRA}, \underline{S-DPO} (NeurIPS ’24) \cite{chen2024softmax}, and \underline{IGD} (NeurIPS ’25) \cite{lin2025igd}.
  \item \textbf{Debiasing methods for LLM-based RS}: Methods that mitigate bias in LLM-based recommendation systems, including \underline{Reweight} (WWW ’24) \cite{jiang2024item} and \underline{D3} (EMNLP ’24) \cite{bao2024decoding}.
\end{itemize}

\subsubsection{Implementation Details}
For all methods based on LLMs, we employ LLaMA3.2-3B \cite{dubey2024llama} as the architecture, and the training epoch is set as 5.
During inference, Constrained Beam Search (CBS) is uniformly applied to all baseline methods, following prior work \cite{wang2025msl, bao2024decoding}, to guarantee that the generated recommendations are restricted to the predefined item set. The beam width is set to 10.
For the evaluation, recommendation accuracy is assessed using two commonly adopted metrics, namely \textit{NDCG@K} and \textit{Hit Ratio@K}, with \textit{K} set to 5 and 10. We compute the metrics for the model at each checkpoint and report the final results from the epoch that achieves the highest NDCG@5. The corresponding test-set results are reported as the final evaluation outcomes.
To ensure a fair comparison, we adopt the official implementations released by the original authors and adjust the hyperparameters of all baseline models in accordance with the configurations recommended in their respective papers.

\subsection{Performance Comparison (RQ1)}
\label{sec:cmp_exp}
Table~\ref{tab:cmp_exp} presents a comparative analysis of the proposed LBR method against the baselines. LBR exhibits substantial performance gains across all datasets and consistently achieves the best results among all compared methods, with an average improvement of $16.82\%$ in NDCG@5 compared with the best baseline method.
These improvements can be attributed to two key factors: (i) by mitigating the impact of length bias on attention, LBR enables the model to correctly focus on the relevant items in the user sequence; and (ii) by alleviating the influence of length bias during decoding, LBR produces an output distribution that better aligns with the real data distribution.
Moreover, since our proposed LBR is applicable to any Transformer-based LLM, it can be seemingly integrated into other LLM-based recommendation approaches. Experiments on two LLM-based recommenders, BIGRec and LLaRA, show that LBR consistently and effectively improves their performance, demonstrating strong generalization capability.
In contrast, the other baseline methods yield only limited improvements.

\begin{figure}[t]
  \centering
  \includegraphics[width=\linewidth]{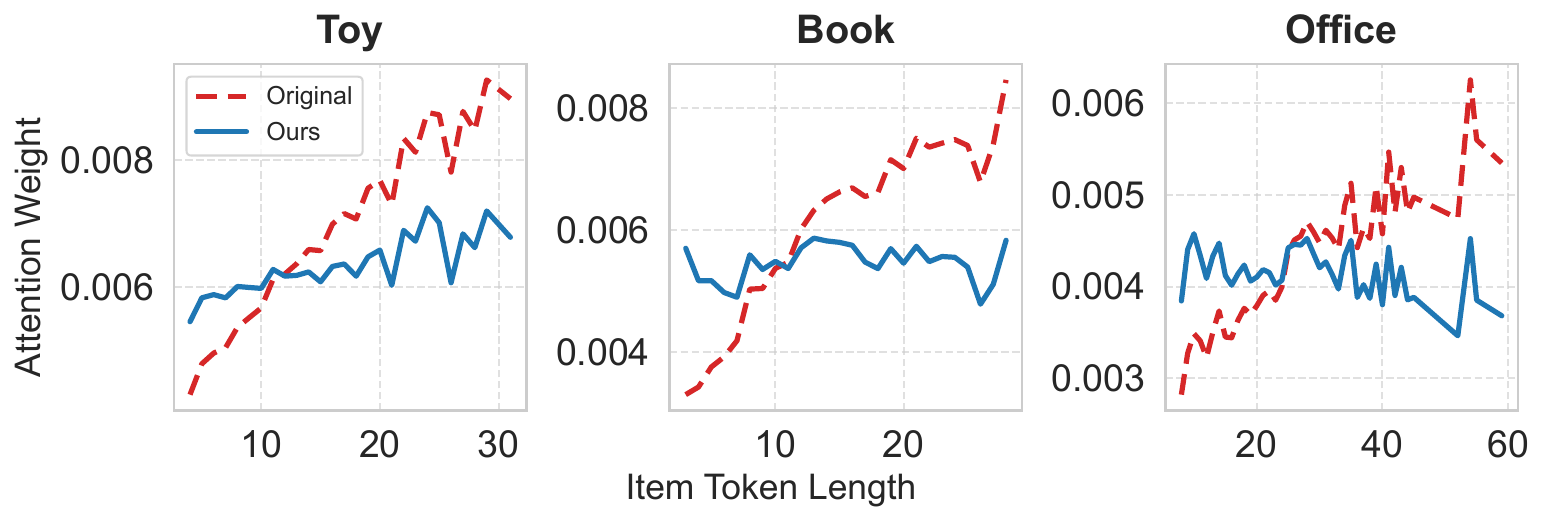}
  \caption{
  Visualization of attention weights for the baseline model and LBR across varying item lengths.
  }
  \Description{ Case Study on Attention. }
  \label{fig:attention_length_bias_comparison}
\end{figure}

\begin{figure}[t]
  \centering
  \includegraphics[width=\linewidth]{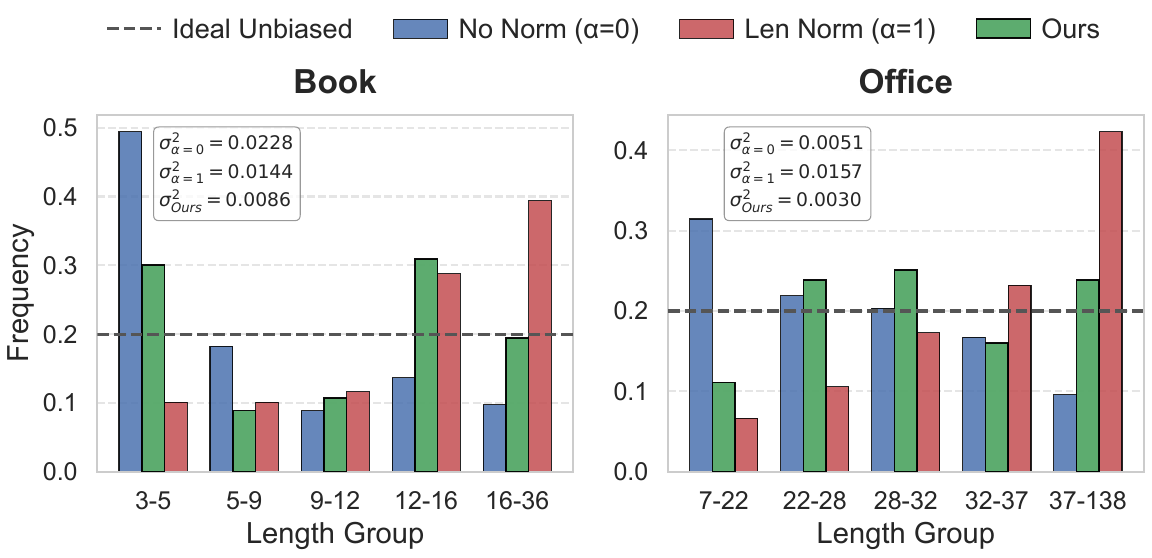}
  \caption{
  Comparison of recommendation distributions across item-length groups under different decoding strategies. Items are partitioned into five groups by token length, with each group containing the same cumulative item frequency in the test set. The variance of the recommendation proportions across groups is reported.
  }
  \Description{ Case Study on Decoding. }
  \label{fig:case_study_decoding}
\end{figure}

\subsection{Ablation Study (RQ2)}
In this section, we conduct an ablation study to investigate the contributions of different modules in LBR, including LAAC and EILN.
We compare LBR with two variants: \underline{“w/o LAAC”}, which indicates removing the LAAC module, and \underline{“w/o EILN”}, which indicates removing the EILN module. The results in Table \ref{table:ablation} show that all modules in LBR are essential.
Specifically, LAAC mitigates the tendency of the model’s attention to be dominated by longer items, enabling the model to correctly focus on important items in the sequence. EILN alleviates the impact of item length on the model’s output distribution, thereby effectively aligning the predicted distribution with the ground-truth data distribution. Together, LAAC and EILN reduce length-induced bias at different stages of the model, resulting in consistent performance improvements.

\subsection{In-depth Analysis (RQ3)}
\label{sec:exp_depth_analyze}
In this section, we conduct an empirical analysis to verify whether LBR effectively mitigates length bias. We examine the issue from two perspectives: (1) the influence of length on the attention and (2) its influence on the decoding stage.

First, LBR effectively reduces the influence of item length on attention weights, preventing longer items from systematically receiving higher attention. As shown in Fig.~\ref{fig:attention_length_bias_comparison}, attention weights becomes nearly invariant to item length across all three datasets.
Second, LBR effectively mitigates length bias during decoding, preventing the model from systematically favoring either longer or shorter items. As shown in Fig.~\ref{fig:case_study_decoding}, LBR substantially reduces the correlation between item length and the recommendation proportion. Besides, LBR yields the smallest variance in recommendation proportions across length groups, indicating that the generated recommendation distribution more closely matches that of test data. In contrast, adjusting length normalization does not effectively alleviate length bias because it does not account for differences in information content across tokens.

\subsection{Efficiency Comparison (RQ4)}
\label{sec:exp_efficiency}
In this section, we conduct efficiency comparison experiments. As shown in Fig.~\ref{fig:efficiency}, LBR maintains high efficiency while achieving the best overall performance. LBR only introduces two additional learnable parameters, \(a\) and \(b\); therefore, the extra overhead incurred in training is essentially negligible. 
In contrast, S-DPO significantly reduces training efficiency because it incorporates an additional reference model and negative samples. CFT is also inefficient, as the introduction of extra counterfactual samples doubles the training data volume. Although D3, Reweight and IGD preserve relatively good efficiency, they perform poorly in terms of accuracy.

\begin{figure}[t]
  \centering
  \includegraphics[width=\linewidth]{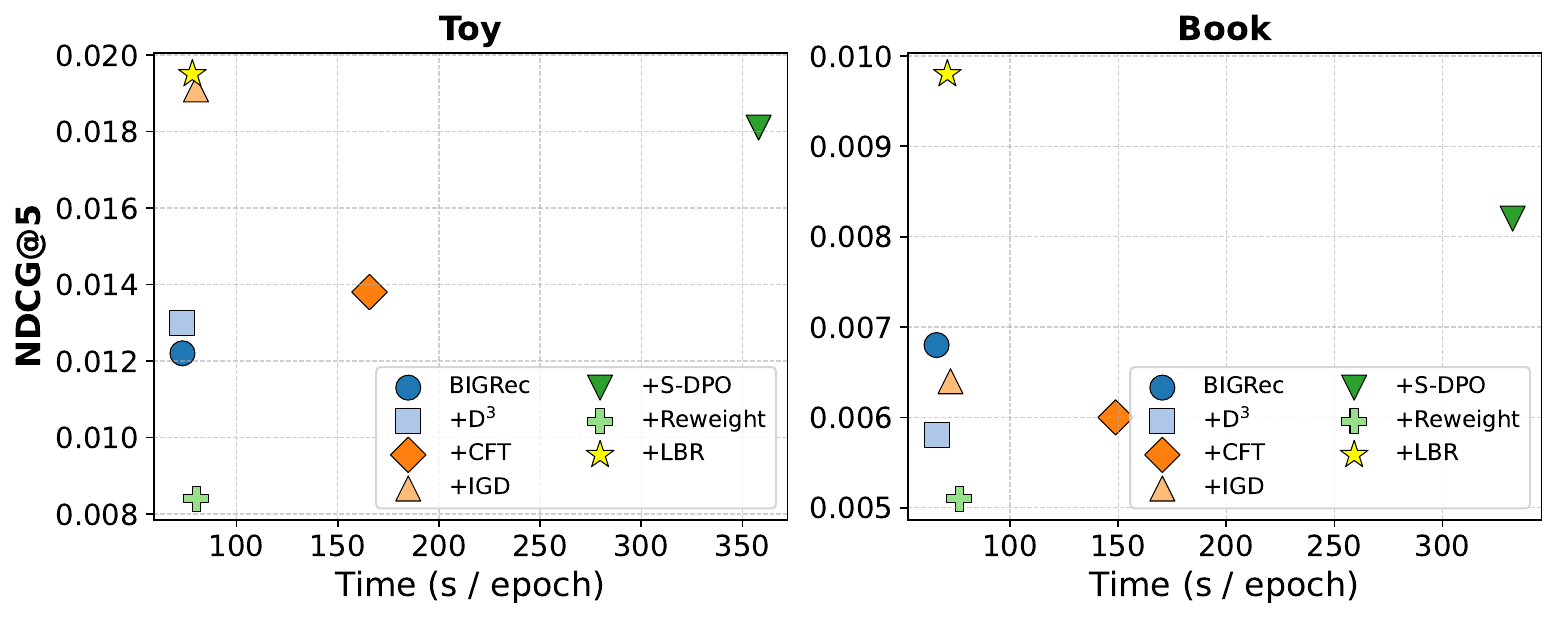}
  \caption{
    Performance comparisons in terms of both recommendation accuracy and training efficiency.
  }
  \Description{ Efficiency. }
  \label{fig:efficiency}
\end{figure}

\section{Related Work}
\subsection{Sequential Recommendation}
Sequential recommendation focuses on modeling users’ evolving preferences based on their historical interaction sequences, with the goal of predicting the next item of interest. 
With the development of deep learning, various advanced neural architectures have been integrated to model user behavior sequences, including RNN-based models such as GRU4Rec \cite{hidasi2015session} and CNN-based approaches like Caser \cite{tang2018personalized}. However, these architectures often struggle to effectively capture long-range dependencies in sequential data. More advanced models, such as SASRec \cite{kang2018self} and BERT4Rec \cite{sun2019bert4rec}, leverage self-attention mechanisms \cite{vaswani2017attention} to better model dependencies among different items in a sequence, enabling the model to automatically focus on the most relevant historical interactions for prediction. 
In addition, recent studies explore self-supervised and contrastive learning techniques to further enhance sequence representations \cite{xie2022contrastive, chen2022intent}, particularly under data sparsity and cold-start scenarios. Moreover, as user preferences may drift over time, DROS \cite{yang2023generic} introduces distributionally robust optimization \cite{sagawa2019distributionally, wang2024distributionally} to improve model robustness under distribution shifts. Readers may refer to the survey for a comprehensive review \cite{pan2026survey}.

\subsection{LLM-based Recommendation}
Large language models (LLMs), endowed with strong reasoning capabilities and extensive world knowledge \cite{dubey2024llama, achiam2023gpt, guo2025deepseek, team2024qwen2}, have been increasingly adopted in recommender systems \cite{wu2024survey, wang2025llm4dsr}. A mainstream paradigm is LLM-based recommendation, which directly employs LLMs as the recommendation backbone \cite{li2023large}. Through carefully designed prompts, the model generates recommendations conditioned on users’ historical interactions, typically represented by textual item titles. To further improve recommendation quality, recent studies have extensively explored fine-tuning LLMs on curated recommendation datasets, yielding substantial performance gains \cite{bao2025bigrec, bao2023tallrec, zhu2024collaborative, wang2025msl, wang2024flip, LLaRA, kim2024large}.

Despite the remarkable success of LLM-based recommendation, LLMs also introduce a range of bias-related challenges. Prior work has shown that LLMs can amplify biases inherent in conventional RS, such as popularity bias \cite{gao2025process, lu2025dual, jiang2024item, gao2025sprec, liao2024rosepo, lin2025recommendation}: LLMs may exhibit an even stronger tendency to recommend popular items after fine-tuning. Moreover, LLMs can introduce new forms of bias. For example, position bias \cite{chen2023bias, ma2023large, hou2024large, bito2025evaluating, jiang2025beyond, dai2024bias, luo2025recranker, zhang2024agentcf} refers to the phenomenon whereby an LLMs are influenced by an item’s position among candidate items in the prompt. Context bias \cite{wang2026doesllmfocusright} refers to the LLM’s over-reliance on auxiliary tokens when producing recommendations, hindering its ability to effectively leverage user preference-related information. Amplification bias \cite{bao2024decoding} denotes the tendency of the model to generate items composed of higher-probability tokens.
In contrast, length bias has largely been overlooked, which stems from the fact that different items may be tokenized into sequences of varying lengths. Because existing debiasing methods typically fail to account for this factor, their effectiveness remains limited. 

\subsection{Length Bias}

Prior work on length bias has mainly addressed the issue on the decoding side, largely in the context of neural machine translation \cite{stahlberg2020neural}.
Existing solutions can be broadly grouped into two lines of research. 
The first line of work mitigates length bias by modifying the decoding score. Typical approaches include length normalization and reward-based or penalty-based adjustments during beam search \cite{wu2016google,murray2018correcting,yang2018breaking,liang2022implicit}. More recent studies further propose a causal debiasing framework that treats sequence length as a confounder \cite{xuewen2021reducing}.
Another line of work targets the data level. For instance, Wang et al. \cite{provilkov2021multi} argue that length bias can stem from an overrepresentation of short sequences in the training data and propose sampling strategies that produce a more uniform length distribution.

However, in recommendation settings, the item space generated by LLMs is constrained by a Trie tree, making different tokens can convey heterogeneous amounts of information. Consequently, existing length-bias mitigation methods—which typically assume that all tokens are equally informative when computing sequence scores—are insufficient to fully eliminate the influence of length in this setting. Moreover, prior work has not considered the impact of length bias on attention.

\section{Conclusion}
This work identifies \textbf{length bias} as a fundamental issue in LLM-based recommendation, arising from the substantial variation in item token-sequence lengths and affecting both input-side self-attention and output-side Trie-Constrained decoding.
To mitigate the bias, we propose \textbf{LBR}, a lightweight and model-agnostic framework that (i) performs Length-Aware Attention Calibration (LAAC) by injecting a length-dependent offset into attention logits to reduce the spurious correlation between attention weights and item length, and (ii) introduces Effective Information Length Normalization (EILN), which leverages Trie branching factors to quantify per-token informativeness and normalizes decoding scores by an effective information length rather than raw token counts.
Extensive experiments on three real-world Amazon datasets demonstrate that LBR consistently improves both accuracy and fairness, while effectively mitigating length bias. 

This work highlights the importance of accounting for structural properties of item text under constrained generation. Future work can explore extending this perspective to broader generation constraints and recommendation settings.

\bibliographystyle{ACM-Reference-Format}
\bibliography{LBR-references}

\appendix

\section{Hyperparameter Study on $\alpha$ in Length Normalization}
\label{appsec:result_tuning_alpha}
In this section, we present experimental results on the Toy and Office datasets using different values of $\alpha$ in the length penalty, as illustrated in Fig.~\ref{fig:tuning_alpha}. As $\alpha$ increases, the model performance exhibits a consistent downward trend. The best performance is achieved when $\alpha$ is close to 0, indicating that the conventional length penalty can have a negative impact on model performance.

\begin{figure}[h]
\centering
\includegraphics[width=\linewidth]{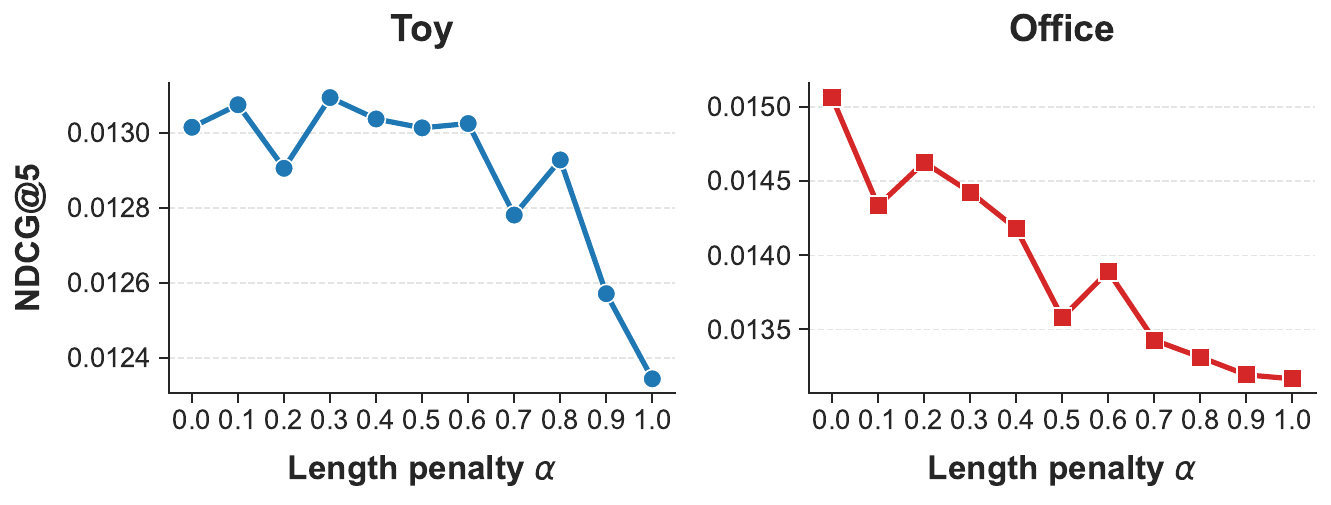}
\caption{Sensitivity analysis of the hyperparameter $\alpha$.}
\label{fig:tuning_alpha}
\end{figure}

\section{Feature Ablation Analysis of Length Bias}
\label{appsec:captum_attr_analysis}
In this section, we analyze the impact of item lengths via a feature ablation study using Captum \cite{miglani2023using} to further corroborate the length bias discussed in Section~\ref{subsec:analyze_attention}.
Specifically, we treat each item in the input prompt as an independent textual feature. By iteratively removing individual items from the prompt template, we quantify the attribution score of each historical item with respect to the generated target recommendation. As shown in Figure~\ref{fig:attribution_score}, the average attribution scores exhibit a clear monotonic increase with item length.

\begin{figure}[h]
\centering
\includegraphics[width=\linewidth]{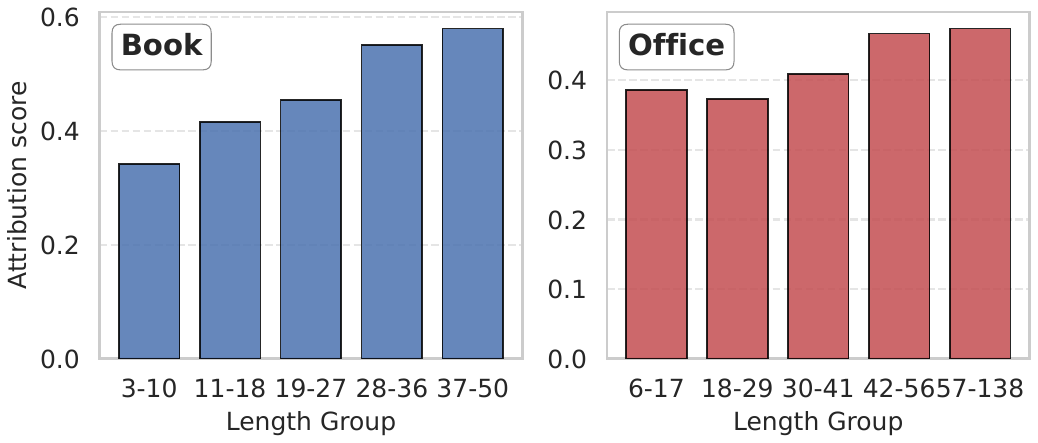}
\caption{Impact of item length on average attribution scores.}
\label{fig:attribution_score}
\end{figure}

\section{Popularity Analysis Across Item Length}
\label{appsec:popularity_analysis}

To examine whether the observed preference for shorter items is confounded by item popularity, we analyze the distribution of training interactions across item length groups. Specifically, we first sort all unique items in the item set by token length and partition them into five bins with equal numbers of items. We then assign each item appearing in the training interactions to the corresponding length bin and compute the proportion of training interactions in each bin. As shown in Figure~\ref{fig:item_popularity}, shorter items are not overrepresented in the training interactions. 

\begin{figure}[h]
\centering
\includegraphics[width=\linewidth]{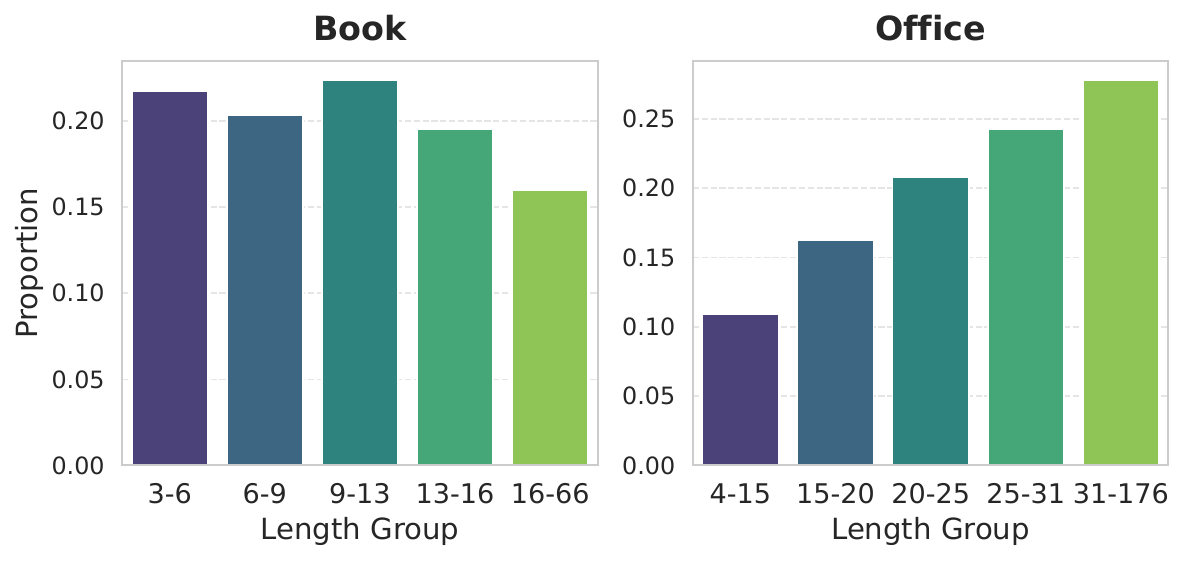}
\caption{Distribution of item length on training interactions.}
\label{fig:item_popularity}
\end{figure}

\section{Experiments with Naive Remedies for the Impact of Length Bias on Attention}
\label{appsec:result_naive}
In this section, we investigate two naive strategies to mitigate the impact of length bias on attention: (1) forcing each item's token sequence to have the same length (fixed to 4) via truncation or padding; and (2) using RQ-VAE to obtain fixed-length semantic ID for items. As shown in Table.~\ref{tab:naive_remedy}, our experiments indicate that neither preprocessing strategy yields consistent improvements; moreover, both can degrade performance in certain settings.

\begin{table}[h]
\caption{The performance results of naive methods.}
\label{tab:naive_remedy}
\begin{adjustbox}{width=\linewidth}
\begin{tabular}{@{}ccccccccc@{}}
\toprule
\multirow{2}{*}{Method}            & \multicolumn{4}{c}{Toy}                                & \multicolumn{4}{c}{Office}        \\ \cmidrule(l){2-9} 
                                   & N@5    & N@10   & H@5    & \multicolumn{1}{c|}{H@10}   & N@5    & N@10   & H@5    & H@10   \\ \midrule
\multicolumn{1}{l|}{BIGRec}        & 0.0123 & 0.0155 & 0.0209 & \multicolumn{1}{c|}{0.0305} & 0.0105 & 0.0226 & 0.0213 & 0.0587 \\
\multicolumn{1}{c|}{+Equal Length} & 0.0123 & 0.0158 & 0.0198 & \multicolumn{1}{c|}{0.0307} & 0.0060 & 0.0211 & 0.0119 & 0.0597 \\
\multicolumn{1}{c|}{+Semantic IDs} & 0.0038 & 0.0048 & 0.0056 & \multicolumn{1}{c|}{0.0088} & 0.0031 & 0.0063 & 0.0057 & 0.0156 \\ \bottomrule
\end{tabular}
\end{adjustbox}
\end{table}

\section{Binary-decision View Diagram}
\label{appsec:bin-tree}
A Trie node with branching factor $H$ can be viewed as expanding into an equivalent subtree of depth $\log_2 H$ in the binary-decision tree. In this section, we demonstrated a tree with four nodes can be represented as a two-layer binary tree (Figure~\ref{fig:bin-tree}).
\begin{figure}[h]
\centering
\includegraphics[width=0.8\linewidth]{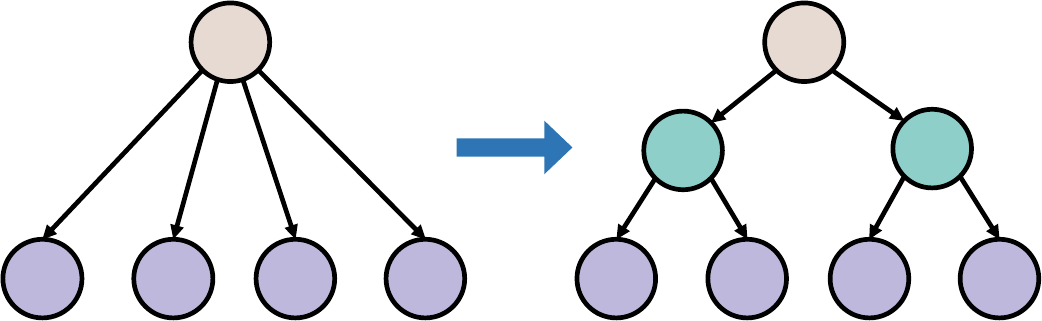}
\caption{ Binary-decision View Diagram }
\label{fig:bin-tree}
\end{figure}

\section{Theoretical proof of Length-Aware Attention Calibration}
\label{appsec:laac_invariance}
\begin{lemma}
\label{lem:laac_invariance}
Let $\mathrm{Attr}'_u(i)$ denote the calibrated item-level attention attribution (i.e., the cumulative attention mass assigned to item $i$ when attending from a query token $u$). Under mild assumptions, applying LAAC makes the expected attribution approximately invariant to item length.
\end{lemma}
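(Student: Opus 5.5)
The plan is to make the ``mild assumptions'' explicit and then compute the calibrated attribution directly from the softmax formula. Fix a query token $u$ and write $z_{uv} = \mathbf{Q}_u\mathbf{K}_v^\top/\sqrt{d} + \mathbf{M}_{uv}$ for the uncalibrated logits. The uncalibrated item-level attribution is $\mathrm{Attr}_u(i) = \sum_{v\in i} e^{z_{uv}}/Z_u$ with $Z_u=\sum_{w} e^{z_{uw}}$. I would state the assumption in the form implicit in the definition of $g$: conditional on an item having length $l$, the expected unnormalized mass $\mathbb{E}\big[\sum_{v\in i} e^{z_{uv}} \,\big|\, l_i=l\big] = c_u\, g(l)$. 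Here $c_u$ does not depend on $l$, which says that $g$ captures the whole length trend. A natural way to justify this is to assume the logits $z_{uv}$ are identically distributed (or exchangeable) across tokens and independent of item length, with common mean $\mathbb{E}e^{z_{uv}}=m_u$. Then $g(l)\propto l$, consistent with the linear fit $g(l)=al+b$.

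The second assumption controls the normalizer. The number of context tokens is large, so $Z_u$ (and its calibrated analogue $Z'_u$) concentrates around its mean and is only weakly dependent on any single item. This lets me approximate $\mathbb{E}[N/Z]\approx \mathbb{E}[N]/\mathbb{E}[Z]$, either by a first-order delta-method argument or by the law of large numbers over tokens.

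With these in place, the computation is short. Since $\delta(l_v)=-\log g(l_v)$ is constant over the tokens of item $i$, the calibrated numerator is $\sum_{v\in i} e^{z_{uv}}/g(l_i)$. Taking conditional expectation gives $c_u\,g(l_i)/g(l_i)=c_u$, which is independent of $l_i$. The calibrated normalizer is $Z'_u=\sum_{j} \sum_{v\in j} e^{z_{uv}}/g(l_j)$, plus the auxiliary tokens weighted by $1/g(\bar l)$. Its expectation is $c_u$ times the number of items plus a term for the auxiliary tokens, and it contains no reference to $l_i$ beyond a single $O(1/n)$ summand. Combining the two via the concentration assumption yields
\begin{equation*}
\mathbb{E}\big[\mathrm{Attr}'_u(i)\mid l_i=l\big] \approx \frac{c_u}{\mathbb{E}[Z'_u]} ,
\end{equation*}
which does not depend on $l$. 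This is the claimed approximate invariance. By contrast, the same calculation without the offset gives $\propto g(l)$, and I would note this to show that the offset is exactly what cancels the trend.

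The main obstacle is the ratio-of-expectations step. The numerator and $Z'_u$ are correlated, since item $i$'s own tokens appear in both, and softmax is nonlinear. I would first try to bound the error by isolating item $i$'s contribution: write $Z'_u = N'_i + R_i$, where $R_i$ is independent of $N'_i$ under the exchangeability assumption. Then use $N'_i/(N'_i+R_i)\le N'_i/R_i$ together with a matching lower bound, and argue that the relative error is $O(\mathbb{E}[N'_i]/\mathbb{E}[R_i])=O(1/n)$ when there are $n$ history items. If that proves too delicate, I would fall back on stating the result at first order in this $1/n$ quantity, which is what ``approximately'' in the lemma permits.
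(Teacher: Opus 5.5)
Your proposal is correct and follows the paper's proof. Your two assumptions are exactly the paper's (A2), $\mathbb{E}[Z_u(i)\mid l_i=l]=\kappa_u\,g(l)$, and (A1), the ratio-of-expectations approximation for the calibrated normalizer, and your key step is the same cancellation $\kappa_u g(l)/g(l)=\kappa_u$, yielding $\mathbb{E}[\mathrm{Attr}'_u(i)\mid l_i=l]\approx\kappa_u/\mathbb{E}[S_u]$. Your additions are refinements rather than a different route: you keep the auxiliary tokens in the normalizer, which the paper drops, and you try to bound the (A1) error via $Z'_u=N'_i+R_i$, where the paper simply assumes (A1). That bound does, however, need $R_i$ to be independent of $N'_i$, i.e.\ independence across items, which exchangeability alone does not give.
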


\begin{proof}
Fix a query $u$. Consider $m$ historical items in the context. Item $i$ corresponds to a set of tokens $\mathcal{T}_i$ with length $l_i \triangleq |\mathcal{T}_i|$. Let $s_{uv}$ denote the pre-calibration attention logit from $u$ to a context token $v$. 
Define 
\begin{equation}Z_u(i)\triangleq \sum_{v\in\mathcal{T}_i}\exp(s_{uv}),\quad X_u(i)\triangleq \frac{Z_u(i)}{g(l_i)},\quad S_u\triangleq \sum_{j=1}^m X_u(j).\label{eq:zxs_def}
\end{equation}

We state two mild assumptions commonly used in analyses.
\paragraph{(A1)} The softmax normalizer $S_u=\sum_{j=1}^m X_u(j)$ concentrates around its mean such that, for an item $i$ of length $l$,
\begin{equation}
\label{eq:assump_a2}
\mathbb{E}\!\left[\frac{X_u(i)}{S_u}\,\bigg|\, l_i=l\right]\approx\frac{\mathbb{E}\!\left[X_u(i)\mid l_i=l\right]}{\mathbb{E}[S_u]}.
\end{equation}
In particular, this approximation is reasonable when $m$ is large and $\{X_u(j)\}_{j=1}^m$ are weakly dependent with finite variance, under which $S_u$ concentrates by the law of large numbers.
\paragraph{(A2)}
There exists a constant $\kappa_u>0$ such that for any length $l$,
\begin{equation}
\label{eq:assump_a1}
\mathbb{E}\!\left[ Z_u(i)\mid l_i=l \right] = \kappa_u\, g(l).
\end{equation}
This assumption formalizes that, before calibration, the expected attention exhibits a multiplicative length trend captured by $g(l)$. As illustrated in Fig.~\ref{fig:attention_length_bias_analysis}, there exists an approximately linear relationship between the sequence length and $Z_u(i)$, which supports the validity of this assumption.

We first express $\mathrm{Attr}'_u(i)$ in terms of the item-level masses.
\begin{align}
\label{eq:attr_ratio}
\mathrm{Attr}'_u(i)&= \frac{\sum_{v\in\mathcal{T}_i}\exp(s_{uv})/g(l_i)}{\sum_{j=1}^m\sum_{v\in\mathcal{T}_j}\exp(s_{uv})/g(l_j)} \notag\\&= \frac{Z_u(i)/g(l_i)}{\sum_{j=1}^m Z_u(j)/g(l_j)}= \frac{X_u(i)}{S_u}.
\end{align}
Conditioning on $l_i=l$ and applying Assumption (A1) in Eq.~\eqref{eq:assump_a2},
\begin{equation}
\mathbb{E}\!\left[\mathrm{Attr}'_u(i)\mid l_i=l\right]=\mathbb{E}\!\left[\frac{X_u(i)}{S_u}\,\bigg|\, l_i=l\right]\approx\frac{\mathbb{E}\!\left[X_u(i)\mid l_i=l\right]}{\mathbb{E}[S_u]}.\label{eq:mf_step}
\end{equation}
Moreover, by the definition $X_u(i)=Z_u(i)/g(l_i)$ and Assumption (A2) in Eq.~\eqref{eq:assump_a1},
\begin{equation}
\mathbb{E}\!\left[X_u(i)\mid l_i=l\right]=\frac{\mathbb{E}\!\left[Z_u(i)\mid l_i=l\right]}{g(l)}=\frac{\kappa_u g(l)}{g(l)}=\kappa_u,\label{eq:cancel}
\end{equation}
which is independent of $l$. Substituting Eq.~\eqref{eq:cancel} into Eq.~\eqref{eq:mf_step} yields
\begin{equation}\mathbb{E}\!\left[\mathrm{Attr}'_u(i)\mid l_i=l\right]\approx\frac{\kappa_u}{\mathbb{E}[S_u]}\triangleq c_u,
\end{equation}
where $c_u$ does not depend on $l$. This proves that the expected calibrated attribution is approximately length-invariant.
\end{proof}

\end{document}